\documentclass[conference]{IEEEtran}
\IEEEoverridecommandlockouts
\usepackage{cite}
\usepackage{amsmath,amssymb,amsfonts}

\usepackage{graphicx}
\usepackage{textcomp}
\usepackage{xcolor}
\usepackage{dsfont}
\usepackage[center]{caption}
\usepackage[english]{babel}
\usepackage{color,soul}
\usepackage{amsfonts}
\usepackage{amsmath} 
\usepackage{accents}
\usepackage{amssymb}
\usepackage{cuted}
\usepackage{amsthm}
\usepackage[thinc]{esdiff}
\usepackage{enumitem}
\usepackage{amsmath, amssymb}
\usepackage{algorithm}
\usepackage{algorithmic}
\usepackage[normalem]{ulem}  
\usepackage{multibib}
\newcites{app}{References (Appendix)} 
\usepackage{amsmath,amsfonts}
\usepackage{mathtools}

\newtheorem{Proposition}{Proposition}

\newtheorem{remark}{Remark}
\usepackage{marginnote}
\usepackage{eso-pic}

 \usepackage{xcolor}
\def\BibTeX{{\rm B\kern-.05em{\sc i\kern-.025em b}\kern-.08em
    T\kern-.1667em\lower.7ex\hbox{E}\kern-.125emX}}
\definecolor{purp}{RGB}{128,0,128}

\begin{document}

\title{Co-Evolving Zero-Day Jamming: Adaptive Attack Synthesis and Graph Attention-Based Online Detection}  

\author{
Ghilas Aissou\IEEEauthorrefmark{1},
R\'{e}mi A. Chou\IEEEauthorrefmark{2},
and Taejoon Kim\IEEEauthorrefmark{1}%
\thanks{\IEEEauthorrefmark{1}Arizona State University, School of Electrical, Computer and Energy Engineering, Tempe, AZ 85287, USA.
Email:\{gaissou,  taejoonkim\}@asu.edu}%
\thanks{\IEEEauthorrefmark{2}The University of Texas at Arlington, Department of Computer Science and Engineering, Arlington, TX 76010, USA.
Email: remi.chou@uta.edu}
\thanks{This work was supported in part by NSF under grants CNS2451268, CNS2514415, ONR under grant N000142112472, and the NSF and OUSD – Research and Engineering, Grant ITE2515378, as part of the NSF Convergence Accelerator Track G.}}

\AddToShipoutPictureFG*{%
    \AtPageLowerLeft{%
        \raisebox{0.35in}{%
            \hspace{0.65in}%
            \footnotesize
            Accepted for publication in the 2026 IEEE Global Communications Conference (GLOBECOM).
        }%
    }%
}

\markboth{}%
{Shell \MakeLowercase{\textit{et al.}}: A Sample Article Using IEEEtran.cls for IEEE Journals}


\maketitle
\thispagestyle{empty}
\pagestyle{empty}

\begin{abstract}
Effective evaluation of zero-day jamming detectors requires robust adversarial models. 
However, existing attack models 
often assume prior knowledge of the target receiver,
limiting their utility as evaluation benchmarks. 
On the detection side, existing detectors fail to capture the global temporal-spectral structure of jamming behavior and cannot differentiate zero-day strategies as they emerge. 
This paper addresses these limitations through a two-pronged framework.  
First, an online detection framework is introduced that combines a graph
attention network (GAT) for temporal-spectral representation
learning with Dirichlet process (DP)-means clustering. 
This framework jointly classifies known and discovers zero-day strategies within a unified learning objective. Second,
 an inference-driven reinforcement learning (RL)
jammer is proposed as an adversarial benchmark. The jammer treats the target receiver as a black-box, infers the detector state via hypothesis testing, and optimizes  the trade-off between attack impact and stealth. 
Simulation results show that the proposed RL jammer outperforms benchmarks, achieving $33\%$ higher attack efficacy and $67\%$ higher stealth. 
The proposed detection framework against the proposed RL jammer is shown to achieve $\!20\%\!$ higher detection accuracy than the benchmarks.
\end{abstract}

\begin{IEEEkeywords}
Graph attention network, novel class discovery, online clustering, reinforcement learning, Zero-day jamming.
\end{IEEEkeywords}
\vspace{-0.4cm}
\section{Introduction}
\label{sec:intro}
While learning-based jamming detectors have demonstrated 
robust detection performance against known jammers 
\cite{pirayesh2021survey,Defeating_Super_Reactive_Jammers}, they remain susceptible  to  zero-day attacks characterized by previously unseen  signatures \cite{zhou2024}. 
To rigorously evaluate detector resilience against such threats, reinforcement learning (RL)-based jammers \cite{Jamming_Bandits_A_Novel, schutz2024linear, 11075817} can be used to autonomously develop zero-day attack behaviors via online learning. 
However, their utility as evaluation benchmarks is often hindered by two limitations. First, they largely rely on a priori knowledge of the target receiver \cite{Jamming_Bandits_A_Novel}. Second, they often overlook stealth requirements, even though maintaining a low probability of detection (LPD) is necessary to prevent countermeasures from neutralizing their attacks.

On the detection side, anomaly-based methods \cite{pirayesh2021survey, Anomaly} can be used to detect zero-day attacks, but they are limited to binary benign-or-attack outputs. Similarly, existing frameworks such as open-set recognition method 
\cite{zhang2024toward} can label zero-day attacks as an unknown class, but they lack the granularity required to distinguish between diverse zero-day attacks. 
This observation readily suggests that  the challenge must be formulated as an \emph{online discovery problem}, where unlabeled samples arrive~{sequentially} as new zero-day jamming behaviors emerge and evolve over time. 
Although online k-means \cite{cohen2021online} supports sequential online updates, it does not jointly classify known and zero-day strategies. 
Existing novel class discovery methods \cite{UNO2021, zhou2024} address joint classification and clustering but operate in offline batch settings and  require a priori knowledge of the the novel class cardinality. 

In this work, we consider a wireless communication link in which a transmitter (Alice) must secure robust communications with a receiver (Bob) in the 
presence of a jammer. 
Bob is equipped with a jamming detector designed to trigger frequency-hopping countermeasure upon identifying an attack. 
Within this framework, our primary contributions addressing the latter limitations are as follows: 
\begin{itemize}
\setlength{\itemindent}{-0.0cm}
\setlength{\leftskip}{-0.4cm}
\item We propose an online detection framework that jointly classifies known jamming strategies and discovers zero-day strategies as they emerge. The main challenge is to expand the label space upon zero-day discovery while maintaining classification stability, without the need for offline retraining or a priori knowledge of the zero-day class cardinality, which is in contrast to the methods in \cite{zhou2024, UNO2021}. This is addressed by a unified online framework that combines a graph attention network (GAT) encoder for temporal-spectral representation learning with Dirichlet process (DP)-means clustering module, enabling incremental class expansion without degrading classification of known jamming strategies.
\item To effectively evaluate the proposed detector, we introduce an inference-driven RL jammer that treats the receiver (Bob) as a \emph{black-box}. This  overcomes the reliance on receiver-side knowledge common in  prior RL-based models  \cite{Jamming_Bandits_A_Novel, schutz2024linear}. 
The jammer infers the hidden detector state at the receiver via cyclostationary feature extraction and generalized likelihood ratio test (GLRT)-based hypothesis testing ~\cite{317857}. Using the inferred state the jammer trains an RL policy to maximize interference efficiency while maintaining LPD (stealth).
\end{itemize}

Simulation results show that the proposed  RL jammer outperforms the benchmark method in \cite{schutz2024linear}, increasing attack efficiency by $33\%$ and stealth by $67\%$ while maintaining a block-box perspective. When evaluated against the proposed jammer the proposed detector  achieves 20$\%$ higher detection probability compared to the scheme in~\cite{zhou2024}. 

\emph{Notation:}
Bold lowercase letters denote column vectors $\boldsymbol{x}
\!=\![x_1,\ldots,x_N]^\top \!\in \mathbb{C}^{N\times 1}$, and bold uppercase letters denote matrices 
$\boldsymbol{X} \!\in\! \mathbb{C}^{M\times N}$, where $\mathbb{C}$ denote the sets of complex numbers. 
Calligraphic letters $\mathcal{X}$ denote sets, with $|\mathcal{X}|$ their cardinality.   
$\mathbb{E}[\cdot]$ 
denotes expectation operator.  

\section{System Model and Problem Formulation}
\label{sec:system_model}
We consider a wireless communication framework, illustrated in 
Fig.~\ref{fig:system_model}, where Alice communicates with Bob over 
$L$ orthogonal frequency channels in 
the presence of a jamming adversary. We adopt a discrete-time complex baseband representation. At time slot $t$, Alice selects 
channel $m\in\{1,\dots,L\}$ and transmits $s_t[n,m]\in\mathbb{C}$, 
$n=0,\dots,N-1$, where $N$ is the number of time samples per slot. The 
jammer transmits $u_t[n,m^{(J)}]\in\mathbb{C}$ on channel 
$m^{(J)}\in\{1,\dots,L\}$ according to a strategy that determines 
the waveform, channel selection, and power. The signal received at 
Bob is
\begin{equation}
r_t[n,m] \!=\! h_{\!AB}s_t[n,m] + \mathbb{I}[m\!=\!m^{\!(J)\!}]\,
h_{\!JB}\,u_t[n,m^{\!(J)\!}] + w_t[n,m],
\label{eq:received_signal}
\end{equation}
where $h_{AB},h_{JB}\in\mathbb{C}$ are the channel gains 
for the Alice-to-Bob and jammer-to-Bob links on channel $m$, respectively,
$\mathbb{I}[\cdot]$ is the indicator function that equals $1$ if the jammer is active on channel $m$ and $0$ when the jammer is not active on channel $m$, and $w_t[n,m] \sim  \mathcal{CN}(0, \sigma_w^2) $ is the additive zero-mean circularly symmetric complex Gaussian
noise with variance $\sigma_w^2$, independent 
across channels. Bob demodulates and decodes $r_t[n, m]$ to retrieve the transmitted frame, evaluating a cyclic redundancy check (CRC) to generate a feedback bit $A_t \in \{0,1\}$, where $A_t = 1$ denotes an acknowledgment (ACK) (CRC passed) and $A_t = 0$ a negative-acknowledgment (NACK) (CRC failed).

The jammer operates in full-duplex mode with ideal self-interference 
cancellation, sensing channel $m^{(J)}$ while transmitting. The 
jammer observation is
\begin{equation}
r_t^{(J)}[n,m^{(J)}] = h_{AJ}\,s_t[n,m]\,
\mathbb{I}[m\!=\!m^{(J)}] + w_t[n,m^{(J)}],
\label{eq:jammer_sensing}
\end{equation}
where $h_{AJ}\in\mathbb{C}$ is the Alice-to-jammer channel 
gain and $w_t[n,m^{(J)}]\sim\mathcal{CN}(0,\sigma_J^2)$. The jammer is assumed to overhear the binary feedback signal $A_t$ transmitted over a dedicated control channel, as shown in Fig.~\ref{fig:system_model}.
\begin{figure}[h!]
\vspace{-0.1cm}
\centering
\includegraphics[width=0.4\textwidth]{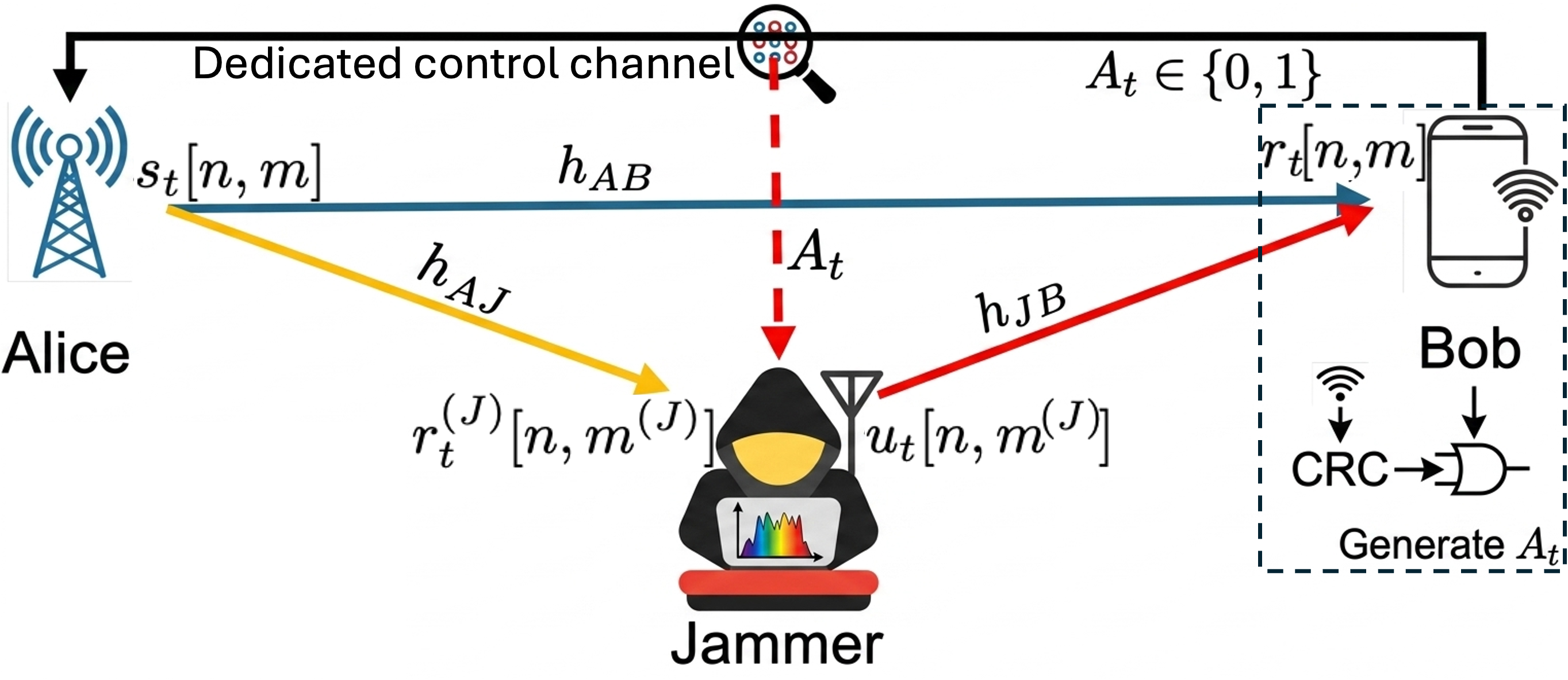}
\captionsetup{width=0.5\textwidth, justification=raggedright, 
singlelinecheck=false}
\caption{\!Wireless \!communication \!system \!under \!jamming \!attack.}
\label{fig:system_model}
\vspace{-0.2cm}
\end{figure}
\vspace{-0.3cm}

\subsection*{Jamming Detector and Problem Statement}
\label{Jamming_Detector_and_Problem_Statement}
Bob constructs a time-frequency frame $\boldsymbol{X}_{\!k}\!\in\!\mathbb{R}^{L\!\times\! T}$~over $T$ consecutive time slots, indexed by $\tau \in \{1, \dots, T\}$ corresponding to the slot index $t = (k-1)T + \tau$. The entry $[\boldsymbol{X}_k]_{m,\tau}$ denotes the average received energy on channel $m$ during local slot $\tau$, defined as $[\boldsymbol{X}_k]_{m,\tau} = \frac{1}{N}\sum_{n=0}^{N-1} \left| r_t[n,m] \right|^2.$
The detector at Bob is trained on a labeled dataset $\mathcal{D}=\{(\boldsymbol{X}_i, y_i)\}_{i=1}^{N_s}$, where $N_s$ is the number of training frames, $y_i = 1$ denotes benign transmission, and $y_i \in \{2, \dots, K\}$ denotes one of $K-1$ known jamming strategies. The detector output is a 
binary decision $d_k\in\{0,1\}$, where $d_k=1$ indicates jamming 
detection and $d_k=0$ indicates no jamming detection. Upon detection ($d_k=1$), Alice and Bob activate a frequency-hopping 
countermeasure in frame $k+1$. While we adopt this concrete setting for our analysis, the specific architecture of the detector and countermeasure remains modular and secondary to the proposed zero-day framework.

The detector is trained on $K$ known classes and is not designed to 
handle zero-day jamming. During deployment, Bob receives an unlabeled 
stream $\{\boldsymbol{X}_k\}_{k=1}^{\infty}$, where each frame may 
belong to a known or a zero-day strategy. The 
task is formulated as an online discovery problem. We want to produce a decision 
$c_k\in\{1,\dots,K\!+\!M_k\}$ for each incoming $\boldsymbol{X}_k$, where 
indices $1,\dots,K$ correspond to known classes and indices 
$K\!+\!1,\dots,K\!+\!M_k$ correspond to discovered zero-day strategies, 
with $M_k$ the number of zero-day strategies discovered up to frame $k$. The challenge is that the detector must expand its label space upon zero-day discovery, with no prior knowledge of how many strategies $M_k$ will emerge. At the same time, the detector must remain stable on known classes to prevent error propagation during subsequent online updates. We next present the online detection framework addressing these challenges.

\section{Online Zero-Day Jamming Clustering}
\label{sec:online_detection}

Fig.~\ref{fig:Detection_framework_globecom} illustrates the proposed detection framework, whose components are detailed in the following subsections. A GAT encoder maps the frame $\boldsymbol{X}_k$ to a latent embedding, which a dynamic classifier scores against the current label space to produce $c_k$. A DP-means module determines whether the embedding belongs to an existing class or creates a new zero-day class, while a multi-view consistency loss keeps the encoder discriminative on the unlabeled deployment stream. We describe the offline pre-training first, then the online~detection.

\begin{figure*}[h!]
    \centering    \includegraphics[width=0.95\textwidth]{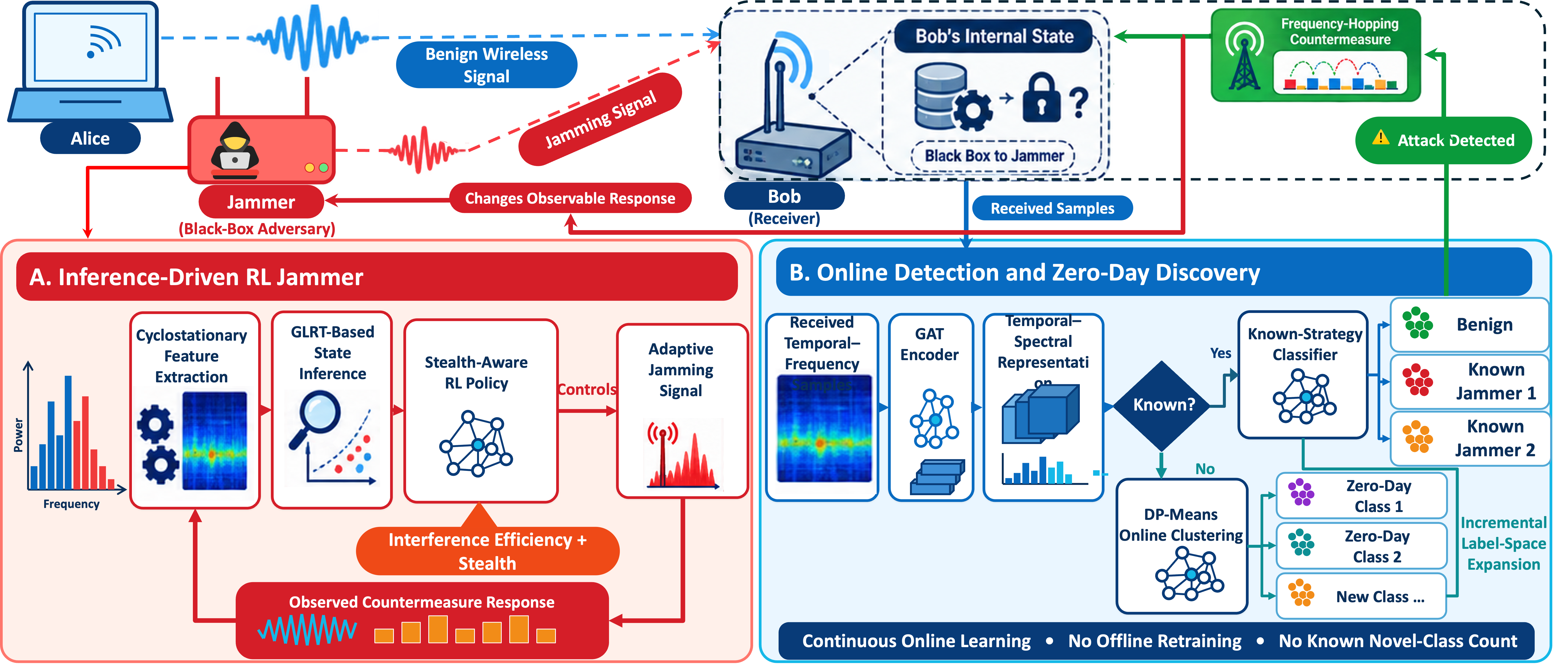}
    \captionsetup{width=\textwidth, justification=raggedright, singlelinecheck=false}
    \caption{Proposed online zero-day jamming detection and inference-driven RL-based jamming synthesis framework.}
\label{fig:Detection_framework_globecom}
    \vspace{-0.48cm}
\end{figure*}

\subsection{GAT Encoder and Offline Pre-Training}
\label{subsec:encoder_pretrain}

We describe the GAT encoder architecture and the pre-training 
procedure that initializes the encoder, classifier, and DP-means 
radius before deployment.

\subsubsection{Graph Construction and Attention}
Jamming strategies induce structured global dependencies across frequency channels and time. Classical encoders, such as convolutional encoders, are not suited to modeling such global structure due to their limited receptive fields~\cite{9672167}. To model these non-local 
dependencies, we represent each frame  $\boldsymbol{X}_k\!\in\!\mathbb{R}^{L\!\times\! T}$ as a fully connected graph over $LT$ nodes, where node $(m,\tau)$ corresponds to entry $[\boldsymbol{X}_k]_{m,\tau}$. Each node $(m,\tau)$ carries the feature vector $\boldsymbol{x}_{k,m,\tau}\!=\!\phi_{\theta_\phi}([\boldsymbol{X}_k]_{m,\tau})\in\mathbb{R}^{d_0\times 1}$, with $\phi_{\theta_\phi}(\cdot)$ a learnable node-wise map, with parameter $\theta_\phi$. 
For notational convenience, nodes are indexed by $i\!=\!(m\!-\!1)T\!+\!\tau\!\in\!\{1,\dots,LT\}$, so $\boldsymbol{x}_{k,i}\triangleq\boldsymbol{x}_{k,m,\tau}$. 

The feature $\boldsymbol{x}_{k,i}$ is projected into query, key, and value vectors $\boldsymbol{q}_{k,i}\!\!=\!\!\boldsymbol{W}_Q^\top\boldsymbol{x}_{k,i}$, $\boldsymbol{k}_{k,i}\!\!=\!\!\boldsymbol{W}_K^\top\boldsymbol{x}_{k,i}$, $\boldsymbol{v}_{k,i}\!\!=\!\!\boldsymbol{W}_V^\top\boldsymbol{x}_{k,i}$, respectively, where $\boldsymbol{W}_Q,\boldsymbol{W}_K,\boldsymbol{W}_V\!\in\!\mathbb{R}^{d_0\!\times\! d}$ are projection matrices and $d$ is the embedding dimension \cite{vaswani2017attention}. The attention weight~from node $\!j\!$ to node $i$ is $e_{k,ij} \!=\! \frac{\exp(\boldsymbol{q}_{k,i}^\top\boldsymbol{k}_{k,j}/\sqrt{d})}{\sum_{\ell=1}^{LT}\exp(\boldsymbol{q}_{k,i}^\top\boldsymbol{k}_{k,\ell}/\sqrt{d})}.$
Each~node aggregates value vectors weighted by attention, $\boldsymbol{h}_{k,i}\!=\!\sum_{j=1}^{LT}e_{k,ij}\boldsymbol{v}_{k,j} \!\in\! \mathbb{R}^{d\times 1}$, and mean pooling yields the frame embedding $\boldsymbol{z}_k \!=\! \frac{1}{LT}\sum_{i=1}^{LT}\boldsymbol{h}_{k,i}\in\mathbb{R}^{d\times 1}$ \cite{vaswani2017attention}.
The GAT encoder parameters $\theta_G\!\triangleq\!(\theta_\phi,\!\boldsymbol{W}_Q,\!\boldsymbol{W}_K,\!\boldsymbol{W}_V)$ define the~mapping~$\boldsymbol{z}_k\!=\!f_{\theta_G}(\!\boldsymbol{X}_k\!)$. 

\subsubsection{Offline Pre-Training}
As shown in Fig.~\ref{fig:Detection_framework_globecom} (in blue), the encoder $f_{\theta_G}$ and initial classifier 
$\boldsymbol{W}_{\mathrm{cls},0}=[\boldsymbol{w}_1,\dots,\boldsymbol{w}_K]
\in\mathbb{R}^{d\times K}$, where $\boldsymbol{w}_c\in\mathbb{R}^{d\times 1}$ 
is the weight vector for class $c \!\in\! \{1,\cdots,K\}$, are trained  on the dataset $\mathcal{D}$ 
by minimizing the supervised loss $\mathcal{L}^\ell = -\frac{1}{N_s}\sum_{(\boldsymbol{X}_i,y_i)\in\mathcal{D}}
\log p(y_i\mid\boldsymbol{X}_i).$
Here, $p(c\mid\boldsymbol{X}_i) = 
{\exp(\boldsymbol{w}_c^\top f_{\theta_G}(\boldsymbol{X}_i))}/
{\sum_{c'=1}^{K}\exp(\boldsymbol{w}_{c'}^\top f_{\theta_G}(\boldsymbol{X}_i))}$
is the softmax posterior. At the end of training, the centroid 
of each known class $c\in\{1,\dots,K\}$ is
\vspace{-0.1cm}
\begin{equation}
\boldsymbol{\mu}_c = \frac{1}{|\mathcal{I}_c|}\sum_{i\in\mathcal{I}_c}f_{\theta_G}(\boldsymbol{X}_i) \in \mathbb{R}^{d\times 1}, 
\label{eq:known_centroids}
\vspace{-0.1cm}
\end{equation}
where $\mathcal{I}_c=\{i:y_i=c\}$. These centroids $\{\boldsymbol{\mu}_c\}_{c=1}^K$ play a dual role: they initialize the DP-means representative set at deployment and provide the reference geometry for calibrating the DP-means radius $\lambda_{\mathrm{dp},c}$. Specifically, $\lambda_{\mathrm{dp},c}$ is set on a held-out validation set $\mathcal{D}_{\mathrm{val}}$ as $\lambda_{\mathrm{dp},c} = \max_{\substack{(\boldsymbol{X}_i,y_i)\in\mathcal{D}_{\mathrm{val}},y_i=c}}\|f_{\theta_G}(\boldsymbol{X}_i)-\boldsymbol{\mu}_c\|_2, \quad c\in\{1,\dots,K\}.$
 During deployment, a frame is assigned to its nearest class only if the distance falls within that class's own radius; otherwise it is flagged as a zero-day candidate.

\subsection{Online Classification and Zero-Day Discovery}
\label{subsec:online_classification}
At deployment, the framework draws an unlabeled minibatch $\mathcal{B}_k^u  $ from the deployment stream, and processes $\boldsymbol{X}_k \in \mathcal{B}_k^u$  in three sequential steps: (i)~two augmented views are generated and embedded to produce reliable latent representations; (ii)~the DP-means module assigns each frame to an existing class or instantiates a new one; and (iii)~the dynamic classifier produces a posterior over the current label space.
 
\subsubsection{Multi-View Augmentations and Unlabeled Consistency}
Each frame $\boldsymbol{X}_k \!\in\! \mathcal{B}_k^u$ is transformed into two stochastic views 
$\boldsymbol{X}_k^{(1)}\!=\!\Psi_1(\boldsymbol{X}_k) \!\in\! \mathbb{R} ^{L\!\times\!T}$ and 
$\boldsymbol{X}_k^{(2)}\!=\!\Psi_2(\boldsymbol{X}_k)\!\in \!\mathbb{R} ^{L\!\times\!T}$, where $\Psi_1$ applies a random time shift and 
$\Psi_2$ applies a Gaussian noise perturbation. The perturbation magnitudes are calibrated on $\mathcal{D}$ to preserve the class label, in the sense that 
$\arg\max_c p(c \mid \Psi_v(\boldsymbol{X}_k)) = \arg\max_c p(c \mid \boldsymbol{X}_k)$ holds for $v\in\{1,2\}$. The shared encoder produces embeddings $ {\boldsymbol{z}}_k^{(v)}=f_{\theta_G}( {\boldsymbol{X}}_k^{(v)})$ and classifier outputs posterior vectors $ {\boldsymbol{p}}_k^{(v)}=[p_k(1\mid {\boldsymbol{X}}_k^{(v)}),\dots,p_k(K+M_{k-1}\mid {\boldsymbol{X}}_k^{(v)})]^\top\in[0,1]^{(K\!+\!M_{k-1})\!\times \! 1}$. Similar to ~\cite{UNO2021}, each posterior is converted by the Sinkhorn--Knopp algorithm into a soft pseudo-label $ {\boldsymbol{b}}_k^{(v)}\in[0,1]^{(K\!+\!M_{k-1})\! \times \! 1}$ enforcing a uniform class marginal. The pseudo-label of one view serves as the target for the other, yielding a consistency loss
\vspace{-0.1cm}
\begin{equation}
\begin{aligned}
\mathcal{L}_k^u \!=\!\!\frac{-1}{|\mathcal{B}_k^u|}\!\!\sum_{\boldsymbol{X}_u\!\in\!\mathcal{B}_k^u}\!\!\!\!\!\sum_{c=1}^{K\!+\!M_{k\!-\!1}}\!\!\!\Bigl[  {b}_{u,c}^{(\!2\!)}\log p_u(c\!\mid\!    {\boldsymbol{X}}_u^{(\!1\!)})\! +\!   {b}_{u,c}^{(\!1\!)}\log p_u(c\!\mid\!    {\boldsymbol{X}}_u^{(\!2\!)}\!)\!\Bigr].
\label{eq:unlabeled_loss}
\end{aligned}
\vspace{-0.1cm}
\end{equation}
This loss prevents the encoder from drifting on the unlabeled stream. The view embeddings $( {\boldsymbol{z}}_k^{(1)},  {\boldsymbol{z}}_k^{(2)})$ are then passed to the DP-means module.

\subsubsection{DP-Means Assignment and Class Expansion}
 
The DP-means module maintains representatives $\{\boldsymbol{\mu}_c\}_{c=1}^{K+M_{k-1}}$, initialized from~\eqref{eq:known_centroids} and extended as new classes are discovered. Each view embedding is independently assigned by comparing its distance to the nearest representative against that class's calibrated radius,
\begin{equation}
\begin{aligned}
c_k^{(v)} = \begin{cases}
c^\star, & \| {\boldsymbol{z}}_k^{(v)}-\boldsymbol{\mu}_{c^\star}\|_2 \le \lambda_{\mathrm{dp},c^\star},\\
\mathrm{NEW}, & \text{otherwise},
\end{cases}
\label{eq:view_assignment}
\end{aligned}
\end{equation}
where $c^\star = \arg\min_c\| {\boldsymbol{z}}_k^{(v)}-\boldsymbol{\mu}_c\|_2$. Consensus between both views is required before any classification is made, guarding against spurious decisions on ambiguous frames. Three outcomes follow as shown in Fig.~\ref{fig:Detection_framework_globecom}:
 
\textit{Case 1: Agreement on an existing class} ($c_k^{(1)}\!=\!c_k^{(2)}=c^\star\!\neq\!\mathrm{NEW}$): the representative is updated with the averaged embedding $\bar{\boldsymbol{z}}_k=\frac{1}{2}( {\boldsymbol{z}}_k^{(1)}+ {\boldsymbol{z}}_k^{(2)})$ via
\begin{equation}
\boldsymbol{\mu}_{c^\star} \leftarrow \boldsymbol{\mu}_{c^\star} + \eta(\bar{\boldsymbol{z}}_u - \boldsymbol{\mu}_{c^\star}),
\label{eq:centroid_update}
\end{equation}
where $\eta \in (0,1]$ is the step size, and its cluster radius is updated online as $\lambda_{\mathrm{dp},c^\star} \leftarrow \max\left(\lambda_{\mathrm{dp},c^\star}, \|\bar{\boldsymbol{z}}_k - \boldsymbol{\mu}_{c^\star}\|_2\right)$.

\textit{Case 2 -- Agreement on a new class} ($c_k^{(1)}=c_k^{(2)}=\mathrm{NEW}$): a new class $c_{\mathrm{new}}=K+M_{k-1}+1$ is created with $\boldsymbol{\mu}_{c_{\mathrm{new}}}\leftarrow\bar{\boldsymbol{z}}_k$ and $\lambda_{\mathrm{dp},c_{\mathrm{new}}}\!\leftarrow\!\| {\boldsymbol{z}}_k^{(1)}\!-\! {\boldsymbol{z}}_k^{(2)}\|_2/2$. The classifier is expanded~as
\begin{equation}
\boldsymbol{W}_{\mathrm{cls},k}\leftarrow[\boldsymbol{W}_{\mathrm{cls},k-1}\ \ \boldsymbol{\mu}_{c_{\mathrm{new}}}],
\label{eq:classifier_expansion}
\end{equation}
and $M_{k}$ is incremented to $M_{k}=M_{k-1}+1$.
 
\textit{Case 3 --- Disagreement} ($c_k^{(1)}\neq c_k^{(2)}$): the frame is discarded from the representative update and pseudo-label generation, but still contributes to $\mathcal{L}_k^u$.
 
The DP-means objective penalizes large distances between samples and their assigned cluster centers, while discouraging the creation of unnecessary new clusters:
\begin{equation}
\mathcal{J}_{\mathrm{cluster},k} = \sum_{\boldsymbol{X}_i\in\mathcal{B}_k^u}\sum_{c}r_{i,c}\|\bar{\boldsymbol{z}}_i-\boldsymbol{\mu}_c\|_2^2 + {\lambda}_{\mathrm{dp}}^2 M_{k-1},
\label{eq:cluster_objective_detector}
\end{equation}
where $r_{i,c}=1$ iff both views agree on class $c$, and ${\lambda}_{\mathrm{dp}}$ is the penalty for creating a new class, set to ${\lambda}_{\mathrm{dp}} = \min_{c} \lambda_{\mathrm{dp},c}$ so that any sample flagged as NEW in \eqref{eq:view_assignment} incurs this penalty.

At step $k$, the classifier $\boldsymbol{W}_{\mathrm{cls},k}\in\mathbb{R}^{d\!\times\!(K\!+\!M_{k\!-\!1})}$ scores $\boldsymbol{z}_k$ via
$p(y_k=c\!\mid\!\boldsymbol{X}_k) \!=\! \frac{\exp(\boldsymbol{w}_c^\top\boldsymbol{z}_k)}{\sum_{c'=1}^{K+M_{k-1}}\exp(\boldsymbol{w}_{c'}^\top\boldsymbol{z}_k)},$
and produces $c_k=\arg\max_c p(c\mid\boldsymbol{X}_k)$. Known and zero-day classes are handled uniformly within the same softmax. When Case~2 occurs, the classifier is expanded via~\eqref{eq:classifier_expansion} with the new column initialized from the discovered centroid.

\subsubsection{Unified Online Objective}
\label{subsec:detector_objective}
 
The supervised loss $\mathcal{L}_k^\ell$, consistency loss $\mathcal{L}_k^u$, and clustering objective $\mathcal{J}_{\mathrm{cluster},k}$ all operate on the same encoder and classifier, so they are not independently optimized but jointly solved through
\begin{equation}
\mathcal{L}_{\mathrm{total},k} = \mathcal{L}_k^\ell + \frac{1}{2e^{2s_u}}\mathcal{L}_k^u + \frac{1}{2e^{2s_c}}\mathcal{J}_{\mathrm{cluster},k} + s_u + s_c,
\label{eq:total_loss_detector}
\end{equation}
where $\mathcal{L}_k^\ell\!=\!{-|\mathcal{B}_k^\ell|}^{-1}\!\sum_{(\boldsymbol{X}_i,y_i)\in\mathcal{B}_k^\ell}\log p(y_i\mid\boldsymbol{X}_i)$ is evaluated on a labeled replay minibatch $\mathcal{B}_k^\ell \subset \mathcal{D}$, and $s_u,s_c\in\mathbb{R}$ are trainable log-uncertainty weights~\cite{kendall2018multi} that balance the three terms without manual tuning. At each step, $\{\theta_G,\boldsymbol{W}_{\mathrm{cls},k},s_u,s_c\}$ are updated by gradient descent on~\eqref{eq:total_loss_detector} and $\{\boldsymbol{\mu}_c\}$ by~\eqref{eq:centroid_update}, coupled within a single online objective. 

To evaluate the proposed zero-day detector, we next introduce the inference-driven RL jammer. 

\section{Inference-Driven RL Jammer}
\label{Inference-Driven RL-Based Jammer}
We model the jammer as an inference-driven adversary that learns its strategy through RL, with the objective of maximizing the outage probability at Bob while remaining~undetected.
\subsection{Jammer Observations and Detector State Estimation}
As illustrated in Fig.~\ref{fig:system_model} (shown in red), the jammer observes the ACK/NACK feedback $A_t$~\cite{Kim_paper_ACK} but does not directly observe the detector decision $d_k$ or the countermeasure activation. To bridge this information gap and inform its RL-based policy, the jammer must probabilistically infer the detector's state. We formulate this state estimation as a binary hypothesis test on the previous frame's detection outcome:
\begin{equation}
\mathcal{H}_0:\ d_{k-1}=0,
\qquad
\mathcal{H}_1:\ d_{k-1}=1,
\label{eq:hypotheses}
\end{equation}
where $\mathcal{H}_0$ denotes stealth maintained and $\mathcal{H}_1$ denotes detection followed by countermeasure activation. The key challenge is that ACK/NACK feedback reflects both channel impairments and jamming effects and therefore does not reveal whether countermeasure activation has occurred. The jammer must infer $d_{k-1}$ from its sensing observation $r^{(J)}[n, m^{(J)}]$ in~\eqref{eq:jammer_sensing}.

To sense Alice's transmission, the jammer exploits the cyclostationary structure of Alice's signal via the spectral correlation function (SCF)~\cite{317857,5072368}. The SCF maps the signal into a two-dimensional cyclic-spectral domain parameterized by spectral frequency $\omega$ and cyclic frequency $\alpha$, providing robustness to noise uncertainty and channel fading, unlike energy-based detection. Using the smoothed cyclic periodogram $\widehat{S}_{r^{(J)},t}^{\alpha}(\omega)$~\cite{317857}, the jammer identifies a discriminative coordinate $(\alpha^\star,\omega^\star)$ during an initial observation phase and computes the frame-level SCF feature
\begin{equation}
\widehat{q}_k \triangleq \frac{1}{T}\sum_{t=(k-1)T+1}^{kT}\left|\widehat{S}_{r^{(J)},t}^{\alpha^\star}(\omega^\star)\right|.
\label{eq:frame_stat}
\end{equation}
By the frequency-shift property of the SCF~\cite{5072368}, when the countermeasure forces Alice to hop from the monitored channel (center frequency $\nu_m$) to another channel ($\nu_{m'}, m' \neq m$), the theoretical SCF satisfies $S_{r^{(J)},t}^{\alpha}(\omega) = |h_{AJ}|^2 S_s^{\alpha}(\omega - \nu_{m'})$, shifting its peak away from $(\alpha^\star,\omega^\star)$ and causing $\widehat{q}_k$ to drop. 

We define the unshifted SCF magnitude $\Psi_m \triangleq |S_s^{\alpha^\star}(\omega^\star - \nu_m)|$, the occupancy ratio $\rho_k \!\triangleq \! T^{-1}\!\sum_{t=(k\!-\!1)T\!+\!1}^{kT}\!\boldsymbol{1}\!\{\nu_{m'} = \nu_m\}$, the SCF leakage residual $\epsilon \triangleq \max_{\nu_{m'} \neq \nu_m}|S_s^{\alpha^\star}(\omega^\star - \nu_{m'})|$, and assume channel-gain bounds $0 < h_{\min}^2 \leq |h_{AJ}|^2 \leq h_{\max}^2$.

Using the asymptotic normality of the SCF estimator established in~\cite{317857}, we derive the separation guarantee for our hypothesis test. For a target error probability $\beta \in (0,1)$, define the confidence parameter $\delta_{N,T}(\beta) \triangleq \sigma_S\sqrt{2\log(2/\beta)/(NT)}$, where $\sigma_S^2$ is the asymptotic variance of $\widehat{S}_{r^{(J)},t}^{\alpha^\star}(\omega^\star)$~\cite{317857}.

\begin{Proposition}
\label{prop:scf_separability}
For any $\beta \in (0,1)$, with probability at least $1-\beta$, under $\mathcal{H}_0$
\vspace{-0.125cm}
\begin{equation}
\vspace{-0.125cm}
\widehat{q}_k \ge h_{\min}^2 \Psi_m - \delta_{N,T}(\beta),
\label{eq:qhat_H0}
\end{equation}
and under $\mathcal{H}_1$,
\begin{equation}
\widehat{q}_k \le \rho_k h_{\max}^2 \Psi_m + (1-\rho_k)h_{\max}^2\epsilon + \delta_{N,T}(\beta).
\label{eq:qhat_H1}
\end{equation}
Consequently, $\widehat{q}_k$ is discriminative with probability at least $1-\beta$ whenever
\begin{equation}
\Delta_q \triangleq (h_{\min}^2 \!-\! \rho_k h_{\max}^2)\Psi_m \!-\! (1\!-\!\rho_k)h_{\max}^2\epsilon \!>\! 2\delta_{N,T}(\beta).
\label{eq:separation_condition}
\end{equation}
\end{Proposition}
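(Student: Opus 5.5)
The argument splits into a deterministic part, which identifies the mean of $\widehat{q}_k$ under each hypothesis, and a stochastic part, which controls the deviation of $\widehat{q}_k$ from that mean.

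\emph{Deterministic part.} For each slot $t$ write
\[
\widehat{S}_{r^{(J)},t}^{\alpha^\star}(\omega^\star) = S_{r^{(J)},t}^{\alpha^\star}(\omega^\star) + E_t ,
\]
where $S_{r^{(J)},t}^{\alpha^\star}(\omega^\star) = |h_{AJ}|^2 S_s^{\alpha^\star}(\omega^\star-\nu_{m'_t})$ by the frequency-shift property. Here $\nu_{m'_t}$ is Alice's active center frequency in slot $t$. The noise $w_t$ is stationary Gaussian, so it contributes no cyclic feature at $\alpha^\star\neq 0$.

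Under $\mathcal{H}_0$ no countermeasure is triggered, so $\nu_{m'_t}=\nu_m$ in every slot of frame $k$. Each term then has magnitude $|h_{AJ}|^2\Psi_m \ge h_{\min}^2\Psi_m$.

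Under $\mathcal{H}_1$, a fraction $\rho_k$ of the slots still occupy $\nu_m$, and each contributes at most $h_{\max}^2\Psi_m$. The remaining fraction $1-\rho_k$ of the slots have hopped, and each contributes at most $h_{\max}^2\epsilon$ by the definition of the leakage residual. Averaging over the $T$ slots gives the deterministic bound $\rho_k h_{\max}^2\Psi_m+(1-\rho_k)h_{\max}^2\epsilon$.

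\emph{Stochastic part.} The triangle inequality gives
\[
\Bigl|\,\widehat{q}_k - \tfrac{1}{T}\textstyle\sum_t |S_{r^{(J)},t}^{\alpha^\star}(\omega^\star)|\,\Bigr| \le \tfrac{1}{T}\textstyle\sum_t |E_t| .
\]
This is where I expect the main obstacle. Bounding $\frac1T\sum_t|E_t|$ term by term does not produce a $1/\sqrt{NT}$ rate. The cleaner route is to use the asymptotic normality from~\cite{317857}: treat the $NT$ samples of the frame as jointly feeding the averaged estimator, so the aggregated error is approximately $\mathcal{CN}(0,\sigma_S^2/(NT))$.

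I would therefore restate the one-sided inequalities directly on the averaged error $\bar E=\frac1T\sum_t E_t$, using $\bigl|\,|a+b|-|a|\,\bigr|\le|b|$ per slot. The magnitudes in $\widehat{q}_k$ sit inside the sum, so I would either assume the per-slot errors are approximately aligned in the sense of the smoothed periodogram, or accept the bound in the Gaussian asymptotic regime as the paper implicitly does.

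A Gaussian (sub-Gaussian) tail bound, applied separately to the real and imaginary parts or simply via the Chernoff bound $P(|X|>\delta)\le 2e^{-\delta^2 NT/(2\sigma_S^2)}$, then gives $|\bar E|\le\delta_{N,T}(\beta)$ with probability at least $1-\beta$ once $\delta=\sigma_S\sqrt{2\log(2/\beta)/(NT)}$. Substituting into the two deterministic bounds yields \eqref{eq:qhat_H0} and \eqref{eq:qhat_H1}.

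\emph{Separation.} Subtracting the upper bound under $\mathcal{H}_1$ from the lower bound under $\mathcal{H}_0$ gives a gap of $\Delta_q-2\delta_{N,T}(\beta)$. This gap is positive exactly under \eqref{eq:separation_condition}. In that case any threshold placed in the gap, for example the midpoint, separates the two hypotheses with probability at least $1-\beta$, which is the claimed discriminativeness.
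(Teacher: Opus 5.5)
Your argument follows the paper's proof step for step. It uses the same slotwise deterministic bounds on $q_k$ (frequency-shift property, no cyclic feature from stationary noise), the same Gaussian tail bound at level $\delta_{N,T}(\beta)$, and the same comparison of the $\mathcal{H}_0$ lower bound with the $\mathcal{H}_1$ upper bound. Keeping the two deterministic bounds separate is cleaner than the paper's version, which writes them as a single chain that, read literally, is self-contradictory whenever $\Delta_q>0$. The paper handles the step you single out in one sentence: it asserts $P(|\widehat q_k-q_k|>\delta_{N,T}(\beta))\le\beta$ directly from per-slot asymptotic normality. That is exactly your ``accept it in the Gaussian regime'' option, so relative to the paper nothing is missing.

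Your suspicion is nonetheless right. This is a genuine gap, in your proposal and in the paper, and the alignment assumption does not close it because $\widehat q_k$ takes the modulus slot by slot. Write $\widehat S_t=S_t+E_t$ with $S_t=|S_t|e^{i\phi_t}$, and treat the $E_t$ as independent $\mathcal{CN}(0,\sigma_S^2/N)$, as the paper implicitly does.

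\emph{Under $\mathcal{H}_0$ the step can be made rigorous.} Since $|S_t|\ge h_{\min}^2\Psi_m>0$, you have $|S_t+E_t|\ge|S_t|+\mathrm{Re}(e^{-i\phi_t}E_t)$. So $\widehat q_k-q_k$ is bounded below by an average of $T$ independent $\mathcal{N}(0,\sigma_S^2/(2N))$ variables. The one-sided Gaussian tail then gives \eqref{eq:qhat_H0} with probability at least $1-\beta^2/4$.

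\emph{Under $\mathcal{H}_1$ it cannot be made rigorous.} In hopped slots $|S_t|\le h_{\max}^2\epsilon$ is small, so $|\widehat S_t|$ is Rice/Rayleigh distributed with mean at least $\frac{\sqrt{\pi}}{2}\sigma_S/\sqrt{N}$. This bias does not average out over $T$. For $\rho_k=0$ and $\epsilon=0$, $\widehat q_k$ concentrates at $\frac{\sqrt{\pi}}{2}\sigma_S/\sqrt{N}$, which already exceeds $\delta_{N,T}(\beta)$ once $T>8\log(2/\beta)/\pi$. As $T$ grows, the event in \eqref{eq:qhat_H1} fails with probability tending to one.

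What you can prove uses $|S_t+E_t|\le|S_t|+|E_t|$ together with Gaussian concentration of the $T^{-1/2}$-Lipschitz map $(E_t)_t\mapsto\frac{1}{T}\sum_t|E_t|$. This gives \eqref{eq:qhat_H1} with an extra term $\frac{\sqrt{\pi}}{2}\sigma_S/\sqrt{N}$ on the right-hand side. The separation condition must then absorb this noise floor, so $N$ itself, not only $NT$, must be large. Your ``aligned errors'' idea is in effect a coherent average $\bigl|\frac{1}{T}\sum_t\widehat S_t\bigr|$. That statistic does have the $1/\sqrt{NT}$ rate when the $S_t$ share a phase, but it is a different statistic from \eqref{eq:frame_stat}.
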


\begin{proof}
Since $w^{(J)}[n,m^{(J)}]$ is second-order stationary, its SCF satisfies $S_{w^{(J)},t}^{\alpha^\star}(\omega)=0$ for $\alpha^\star \neq 0$~\cite{317857}. By independence of $s[n,m]$ and $w^{(J)}[n,m^{(J)}]$, the nonzero-cyclic-frequency SCF of the jammer's sensed signal is governed by the Alice component, yielding $S_{r^{(J)},t}^{\alpha^\star}(\omega)=|h_{AJ}|^2 S_s^{\alpha^\star}(\omega-\nu_{m'})$. Under $\mathcal{H}_0$, all slots satisfy $\nu_{m'}=\nu_m$, so $|S_{r^{(J)},t}^{\alpha^\star}(\omega^\star)| \ge h_{\min}^2 \Psi_m$. Under $\mathcal{H}_1$, slots with $\nu_{m'} \neq \nu_m$ contribute at most $h_{\max}^2\epsilon$. Averaging over $T$ slots yields
\begin{equation}
\rho_k h_{\max}^2 \Psi_m + (1-\rho_k) h_{\max}^2 \epsilon \ge q_k \ge h_{\min}^2 \Psi_m,
\label{eq:q_bounds}
\end{equation}
where $q_k \!\triangleq \! T^{-1}\!\sum_{t=(k-1)T+1}^{kT}\!|S_{r^{(J)},t}^{\alpha^\star}(\omega^\star)|$ is the theoretical counterpart of $\widehat{q}_k$. From~\cite{317857}, $\sqrt{N}(\widehat{S}_{r^{(J)},t}^{\alpha^\star}(\omega^\star) - S_{r^{(J)},t}^{\alpha^\star}(\omega^\star)) \xrightarrow{d} \mathcal{CN}(0,\sigma_S^2)$. Applying a Gaussian tail bound to the frame-level SCF feature $\widehat{q}_k$ yields $P(|\widehat{q}_k - q_k| > \delta_{N,T}(\beta)) \le \beta$. Combining with~\eqref{eq:q_bounds} yields~\eqref{eq:qhat_H0}--\eqref{eq:qhat_H1}, and~\eqref{eq:separation_condition} follows by requiring the $\mathcal{H}_0$ lower bound to exceed the $\mathcal{H}_1$ upper bound.
\end{proof}

\begin{remark}
Rearranging~\eqref{eq:separation_condition}, the minimum observation length required to guarantee discrimination at confidence $1-\beta$ is $NT > 2\sigma_S^2\log(2/\beta)/\Delta_q^2$, providing a design rule for selecting the frame parameters $N$ and $T$ as a function of the channel conditions and target reliability.
\end{remark}

 From the asymptotic normality established in~\cite{317857}, $\widehat{q}_k \sim \mathcal{N}(q_k, \sigma_S^2/(NT))$, where $q_k$ depends on the unknown channel gains through~\eqref{eq:q_bounds}. Since these gains are unknown, the jammer applies the GLRT
\begin{equation}
\log\frac{\max_{\vartheta_1}\,p(\widehat{q}_k\mid\mathcal{H}_1;\vartheta_1)}{\max_{\vartheta_0}\,p(\widehat{q}_k\mid\mathcal{H}_0;\vartheta_0)} \mathop{\gtrless}_{\mathcal{H}_0}^{\mathcal{H}_1} \eta,
\label{eq:glrt}
\end{equation}
where $\vartheta_i$ collects the unknown channel parameters under $\mathcal{H}_i$. The threshold $\eta$ is calibrated during the observation phase to satisfy the Neyman--Pearson constraint $P_{\mathrm{FA}} \triangleq \Pr(g_k = 1 \mid \mathcal{H}_0) \leq \alpha_{\mathrm{FA}}$ for a prescribed false-alarm level $\alpha_{\mathrm{FA}}$. The GLRT outcome is mapped to
\begin{equation}
g_k = \begin{cases} 1, & \text{if the GLRT selects }\mathcal{H}_1,\\ 0, & \text{otherwise,} \end{cases}
\label{eq:gt_mapping}
\end{equation}
so that $g_k$ serves as a one-frame-delayed noisy proxy for Bob's hidden detector response $d_{k-1}$, with $g_k\!=\!1$ indicating inferred detection ($d_{k\!-\!1}\!\!=\!\!1$) and $g_k\!\!=\!0$ indicating no detection ($d_{k\!-\!1}\!\!=\!\!0$).

\subsection{Jamming Decision-Making and Learning Framework}
\label{Section_RL_jamming}

At decision frame $k$, the jammer selects an action $a_k \in \mathcal{A}$ following a policy $\pi(a_k \mid o_k)$, where $o_k$ is its observable state. The action space $\mathcal{A}$ consists of all combinations of a type of jamming waveform (e.g., barrage noise, pulsed, etc.), a target frequency channel $m^{(J)}\in \{1, \dots, L\}$, and a transmit power level subject to $\frac{1}{TN}\sum_{t=(k-1)T+1}^{kT}\sum_{n=0}^{N-1}
\mathbb{E}\bigl[|u_t[n,m^{(J)}]|^2\bigr] \leq P_{\max}.$ Each action $a_k \in \mathcal{A}$ specifies the jamming signal $u[n,m^{(J)}]$ in \eqref{eq:received_signal} during frame $k$.

The observable state is $o_k \triangleq (\tilde{q}_k, g_k, \tilde{A}_{k-1})$, where $\tilde{q}_k$ and $\tilde{A}_{k-1}$ are uniform quantizations of $\widehat{q}_k$ and $\bar{A}_{k-1} \triangleq T^{-1}\sum_{t=(k-2)T+1}^{(k-1)T} A_t$ into $q_{\mathrm{lev}}$ and $A_{\mathrm{lev}}$ levels, respectively. The resulting state space is of size $|\mathcal{O}| \!=\! q_{\mathrm{lev}} \!\times\! 2\! \times\! A_{\mathrm{lev}}$, which is controllable by the quantization level for tractable Q-learning.

The reward function must be designed to capture the trade-off between communication disruption (impact) and stealth. A more aggressive jammer disrupts communication but also increases the risk of detection. Once detected, the countermeasure forces Alice to hop to a new channel, and the jammer must spend additional frames re-sensing to locate the new channel before resuming jamming. The per-frame reward is
\begin{equation}
R_k = \underbrace{(1 - \bar{A}_k)}_{\text{impact}} - \zeta\,\underbrace{g_{k+1}}_{\text{stealth penalty}},
\label{eq:reward}
\end{equation}
where $g_{k+1}$ is the GLRT-based inference of Bob's detector decision $d_k$, available with one-frame delay, and $\zeta\geq 0$ governs the trade-off. The first term rewards communication disruption; the second penalizes inferred detection. Since $g_{k+1}$ becomes available at frame $k+1$, the Q-learning update is performed once $g_{k+1}$ becomes available.  Under the adopted quantized state representation, the learning problem is treated as a finite-state MDP approximation.

The jammer maximizes the expected discounted return $\max_{\pi}\,\mathbb{E}_{\pi}[\sum_{k=1}^{\infty}\gamma^{k-1}R_k]$ with discount factor $\gamma \in (0,1)$ via the Q-learning update with learning rate $\xi \in (0,1)$.
\begin{equation}
Q(o_k, a_k) \!\!\leftarrow \! (1 \!- \!\xi)\,\!Q(o_k, a_k) \!+ \!\xi\!\left[\!R_k \!+\! \gamma\max_{a'\in \mathcal{A}} \!Q(o_{k\!+\!1},\! a')\!\right].
\label{eq:q_update}
\end{equation}

\begin{remark}
A fixed stationary policy corresponds to a single jamming strategy, regardless of the frame-to-frame variability in the actions drawn from that policy. This follows from the equivalence between behavior strategies and mixed strategies in games with perfect recall~\cite{kuhn1953extensive}.
\end{remark}

\section{Simulation and Results}
\label{sec:sim_setup}
Alice communicates with Bob over \(L=64\) orthogonal frequency channels. Each
frame contains \(T=100\) slots, and each slot contains \(N=80\) complex
baseband samples. The Alice-to-Bob and jammer-to-Bob links are modeled as
independent Rayleigh fading channels. Alice transmits with fixed normalized power, \(P_A=\mathbb{E}[|s_t[n,m]|^2]\), while the noise variance \(\sigma_w^2\) is set such that \(10\log_{10}(P_A/\sigma_w^2)=20\) dB during training. During evaluation, \(\sigma_w^2\) is adjusted such that \(10\log_{10}(P_A/\sigma_w^2)\) varies from $5$ dB to \(30\) dB to evaluate the jammer under different channel conditions.  Bob is equipped with a convolutional neural network based jamming detector trained on $K=4$ known classes as discussed in Section~\ref{sec:system_model}:  benign transmission, sequential jammer, random jammer, and a reactive jammer \cite{pirayesh2021survey}.
In this simulation, the proposed jammer action space is instantiated by eight waveform-power
combinations, formed by four jamming power levels and two jamming waveforms: wideband noise
and narrowband tone interference. To generate zero-day jamming strategies, three independently initialized policies of the proposed RL jammer are trained. In view of Remark~2, each learned stationary policy is interpreted as one jamming strategy. Three baseline jammers are considered: a LinUCB contextual bandit~\cite{schutz2024linear}, a super-reactive jammer (SRJ)~\cite{Defeating_Super_Reactive_Jammers}, and a random jammer \cite{pirayesh2021survey}. 
\begin{figure*}[h!]
    \centering    \includegraphics[width=\textwidth]{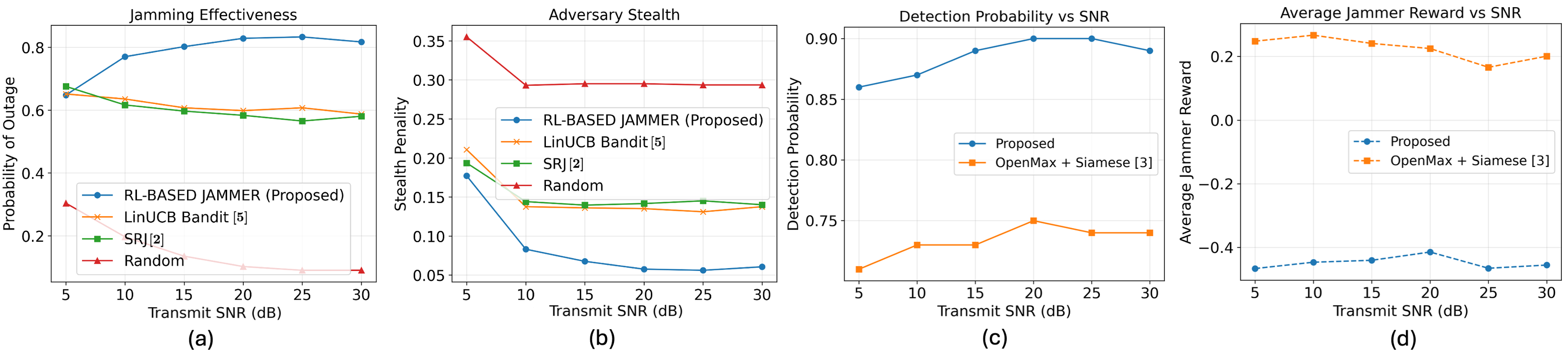}
    \captionsetup{width=\textwidth, justification=raggedright, singlelinecheck=false}
    \caption{Performance comparison across transmit SNR values: (a) jamming effectiveness, (b) adversary stealth, (c) detector detection probability, and (d) average jammer reward for the proposed and benchmark detectors.}
    \label{fig:results}
    \vspace{-0.1cm}
\end{figure*}
\paragraph*{RL Jammer Performance}
Fig.~\ref{fig:results}(a) shows that the proposed jammer consistently attains
the highest outage probability, exceeding \(0.8\) for SNR values of \(15\) dB
and above, as the learned reward explicitly balances disruption and stealth
after accounting for Bob's defensive response. Fig.~\ref{fig:results}(b)
reports the average stealth penalty, i.e., the average value of \(g_k\)
in~\eqref{eq:reward}. The proposed jammer achieves the lowest penalty over the
entire SNR range, indicating that the inference-driven feedback enables the
jammer to avoid actions likely to trigger Bob's detector, whereas the baselines
either react myopically or ignore the detector altogether.

On the defense side,
the proposed detector is first trained offline on the \(K=4\) known jamming
classes and then adapted during deployment through~\eqref{eq:total_loss_detector}.
As a benchmark, we consider the open-set recognition method in~\cite{zhou2024},
combining OpenMax, Weibull-based score calibration, and a Siamese similarity
module on the same encoder backbone.

\paragraph*{Online Detection Performance}
Fig.~\ref{fig:results}(c) compares the detection probability of the proposed
detector with that of the benchmark in~\cite{zhou2024}. The proposed detector
maintains a detection probability between \(0.86\) and \(0.92\), versus
\(0.68\)--\(0.76\) for the benchmark, attributed to GAT-based temporal--spectral
representation learning and online clustering, which captures long-range
structure and adapts to previously unseen jamming strategies.
Fig.~\ref{fig:results}(d) shows the jammer reward remains negative against the
proposed detector for all SNR values but positive against the benchmark,
consistent with the reward definition in Section~\ref{Section_RL_jamming}:
under the proposed detector, the penalty associated with the inferred detector
response dominates the jammer's disruption gain, while the benchmark fails to
impose the same cost on the adaptive adversary.
\section{Conclusion}
We proposed a unified framework for zero-day jamming detection and adversarial benchmarking. A GAT-based online detector jointly classifies known and discovers emerging zero-day strategies without retraining, while an inference-driven RL jammer treats the receiver as a black-box and balances attack impact against stealth. Simulations show the proposed jammer synthesizes more effective and stealthier attacks than benchmarks, and the detector reliably identifies zero-day strategies that evade existing detectors.

\bibliographystyle{IEEEtran}
\bibliography{Ref2}

\end{document}